\theoremstyle{definition}
\theoremstyle{remark}
\newcommand{\R}{{\mathbb{R}}}
\newcommand{\T}{{\mathcal{T}}}
\newcommand{\W}{{\mathcal{W}}}
\renewcommand{\O}{{\mathcal{O}}}
\newcommand{\id}{{\mathbb{1}}} 
\newcommand{\A}{{A}}
\newcommand{\norm}[1]{\|{#1}\|}
\newcommand{\Norm}[1]{\left\|{#1}\right\|}
\newcommand{\Abs}[1]{\left|{#1}\right|}
\newcommand{\cc}[1]{\mathsf{{#1}}}	
\newtheorem{theorem}{Theorem}
\newtheorem{lemma}{Lemma}
\title{Implementing smooth functions of a Hermitian matrix on a quantum computer}
\author{\small Sathyawageeswar Subramanian$^1$\thanks{\texttt{ss2310@cam.ac.uk}} }
\author{\small Steve Brierley$^2$}
\author{\small Richard Jozsa$^1$}
\affil{\it \small $^1$ DAMTP, Centre for Mathematical Sciences \\ University of Cambridge, Cambridge~CB3 0WA, UK}
\affil{\it \small $^2$ River Lane Research, \\ 3 Charles Babbage Road, Cambridge, CB3 0GT, UK}
\date{}
\begin{document}
\maketitle


\begin{abstract}
We review existing methods for implementing smooth functions $f(A)$ of a sparse Hermitian matrix $A$ on a quantum computer, and analyse a further combination of these techniques which has some advantages of simplicity and resource consumption in some cases. Our construction uses the linear combination of unitaries method with Chebyshev polynomial approximations. The query complexity we obtain is $\O(\log C/\epsilon)$ where $\epsilon$ is the approximation precision, and $C>0$ is an upper bound on the magnitudes of the derivatives of the function $f$ over the domain of interest. The success probability depends on the $1$-norm of the Taylor series coefficients of $f$, the sparsity $d$ of the matrix, and inversely on the smallest singular value of the target matrix $f(A)$.

\end{abstract}


\section{Introduction}
There are many quantum algorithms that exhibit an advantage over known classical algorithms\footnote{at the time of writing there are over 380 papers listed in the quantum algorithms zoo}. Such a diversity of results can make it difficult to express the computational capability of a quantum computer in a clear and faithful manner to someone new to quantum computing. Here we address a large family of quantum algorithms that are often used as the main subroutine in many important applications. In particular we consider quantum algorithms that apply some smooth function of a Hermitian matrix to an input state. Examples of algorithms of this type include Hamiltonian simulation used in a variety of applications including quantum chemistry, the Quantum Linear Systems or matrix inversion algorithm \cite{Harrow2009QuantumEquations,Childs2015} used in quantum machine learning applications, and sampling from Gibbs distributions used in the recent quantum semi-definite programming (SDP) solvers \cite{Brandao2017ExponentialLearning,vanApeldoorn2017QuantumBounds}. 

In addition to briefly reviewing the methods for implementing matrix functions $f(A)$ used in these algorithms, we further use one of them to obtain an algorithm with query complexity expressed simply in terms of properties of the matrix $A$ and the Taylor expansion of $f$. Restricting to Hermitian matrices allows one to take advantage of their spectral decomposition to naturally extend results on approximating real functions to functions of matrices. Thus, for a real valued smooth function $f$, we look for a quantum algorithm which, when equipped with a quantum oracle for a Hermitian matrix $A$ and a map that prepares some state $\ket{x}$, returns a state that is close to $\frac{f(A)\ket{x}}{\Norm{f(A)\ket{x}}}$ in  $l_2$-norm. If measurements are involved, we require the desired output state to be obtained with high probability.

Among the earliest work of this kind, Klappenecker and R\"otteler \cite{Klappenecker2003EngineeringAlgorithms} studied the implementation of functions of unitary matrices, under some mild assumptions. This work embodies the same principles and ideas that were later developed into a method for Hermitian matrices in works of Berry, Childs, Kothari and collaborators. Other early work in this direction was motivated by studies on Hamiltonian simulation (for example, \cite{Somma2002SimulatingNetworks} and \cite{Childs2012HamiltonianOperations}), and algorithms for quantumly solving ordinary differential equations (for example, \cite{Leyton2008AEquations} describe a method to implement non-linear transformations of the amplitudes of a given input state). Kothari \cite{Kothari2014EfficientComplexity} gives a detailed description of probabilistic implementations of operators and technical lemmas on modified versions of amplitude amplification. Broadly, three different methods have emerged to realise the action of functions of Hermitian matrices on a quantum computer:

\begin{enumerate}
\item Using Hamiltonian simulation and Quantum Phase Estimation (QPE) as a technique to obtain a representation of the target state in the spectral basis of the matrix, followed by applying a suitably engineered unitary that computes the matrix function. This is the method used in the matrix inversion algorithm of \cite{Harrow2009QuantumEquations} and in the quantum recommendation systems algorithm of \cite{Kerenidis2016QuantumSystems}.

\item By representing the target matrix as a Linear Combination of Unitaries (LCU). Given an algorithm to implement each of the unitaries in the summation, the target matrix can be embedded in a unitary operation on a larger state space (adjoining the necessary ancillary qubits). This necessitates a post-selection step at the end, and so produces the required state with some associated success probability. This method has been used widely for Hamiltonian simulation and matrix inversion algorithms \cite{Childs2012HamiltonianOperations,Berry2015SimulatingSeries,Childs2015}.

\item More recently, Low and Chuang \cite{Low2016} have introduced a method called Qubitization and Quantum Signal Processing (QSP) (also referred to as the block-encoding method). They study how to implement functions of a diagonalisable complex matrix when provided with a unitary oracle and a signal state such that the action of the unitary on the subspace flagged by the signal state is proportional to the action of the target matrix. This method is thought to have optimal query complexity and optimal ancilla requirements for a large class of functions. This method has been substantially expanded and generalised to what has been termed Quantum Singular Value Transformation in a recent article of Gily\'en et.~al.~\cite{GSLW2018}.
\end{enumerate}

Typically, these methods have been applied to a specific function or application such as Hamiltonain simulation. The question of whether there exisits a more general result for an arbitrary function $f$ is very natural. Recently, van Apeldoorn et.~al.~\cite{vanApeldoorn2017QuantumBounds} used the Linear Combination of Unitaries or LCU method \cite{Childs2015} with an approximation by Fourier series to provide a constructive approach for the implementation of bounded smooth functions of a $d-$sparse Hermitian matrix (specified by an oracle) on quantum computers. They give an algorithm for constructing the approximating linear combination, and their quantum algorithm has query complexity linear in the sparsity $d$. This method uses Hamiltonian simulation as a black-box subroutine and involves converting the Taylor series of the function into a Fourier series, through a sequence of approximation steps. Both \cite{Low2016} and \cite{GSLW2018} also prove theorems about the classes of functions their methods can implement and describe the approach to construct the implementation for important examples like Hamiltonian simulation.

In this article, we first review the spectral method, the LCU method, and Qubitization in sections \ref{sec:HHL_method}, \ref{sec:QLSA-lcu}, and \ref{sec:qubitization} respectively. We then give an alternative constructive approach for implementing smooth functions of a Hermitian matrix on a quantum computer in section \ref{sec:LCU_Chebyshev}, based on the method of Chebyshev polynomial approximations used for matrix inversion in \cite{Childs2015}. The complexity of this alternative method is not directly comparable to that of \cite{vanApeldoorn2017QuantumBounds} but the main attraction is its simplicity. The use of Chebyshev polynomials allows the query complexity of the quantum algorithm to be directly obtained from the properties of the Taylor series expansion of $f$.


\section{Preliminaries}

We follow the usual model in which quantum algorithms access classical data (such as matrix entries) using a unitary oracle, and use the query complexity as a measure of efficiency. Another important quantity in the circuit model of quantum algorithms is the gate complexity, the number of $2$-qubit gates used. An algorithm is gate efficient if its gate complexity is larger than the query complexity by at most poly-logarithmic factors.  For details on the properties of the quantum walk construction including its gate complexity, we refer to \cite{Berry2012} and \cite{Berry2015HamiltonianParameters}.

A matrix is $d$-sparse if in any row, there are at most $d$ non-zero entries. For a $d$-sparse $N\times N$ matrix $\A$, we assume we have an oracle $\mathcal{P}_\A$ which performs two functions: 
\begin{eqnarray}
\label{eqn:Aoracle}
\ket{j, k, z} &\mapsto& \ket{j, k, z \oplus \A_{jk}} \\
\ket{j, l} &\mapsto& \ket{j, \text{col}(j, l)}
\end{eqnarray}
for all $j, k \in \{1,\ldots,N\}$, and $l \in \{1,\ldots,d\}$. The first line simply returns the matrix entry $A_{jk}$ in fixed-precision arithmetic. The second line computes the column index of the $l^{th}$ non-zero entry in row $j$, with the convention that it returns the column index of the first zero entry when there are fewer than $l$ non-zero entries in row $j$. When this can be efficiently done, the matrix is said to be efficiently row-computable.

Functions of a matrix are defined through its spectral decomposition. An $N\times N$ Hermitian matrix $A$ has $N$ real eigenvalues $\lambda_j$, with a spectral decomposition as a sum of projectors onto its eigenspaces spanned by the eigenvectors $\ket{u_j}$. A function $f$ of such a matrix is then defined as having the same eigenspaces, but with the eigenvalues $f(\lambda_j)$
\begin{equation}
\label{eqn:matrixFnDefn}
A=\sum_{i=1}^N \lambda_j \ket{u_j}\bra{u_j} \implies f(A)=\sum_{i=1}^N f(\lambda_j)\ket{u_j}\bra{u_j}.
\end{equation}
For finite dimensional matrices, any two functions $f$ and $g$ that are equal at the $N$ eigenvalues of the matrix give rise to the same matrix function. This suggests that it might be possible to treat matrix functions as corresponding \textit{interpolating} polynomials.


\section{The Spectral method}
\label{sec:HHL_method}
The first of the methods mentioned in the introduction uses Hamiltonian simulation and phase estimation as algorithmic primitives. The general framework is as follows.

For an $N\times N$ hermitian matrix $\A$, we first simulate Hamiltonian evolution under $\A$ for some time $t$ to prepare the state $e^{i\A t}\ket{\psi}$, for some initial state $\ket{\psi}$. The next step is to do Quantum Phase Estimation (QPE), logically decomposing the state in the spectral basis of $\A$, as a linear combination of product states with an ancillary register containing (approximate) eigenvalues of $\A$ normalised to $[0,1]$. At this stage the novelty lies in constructing an operation that will use the output state from the previous step to transform the probability amplitudes to the chosen function of the eigenvalues - typically a unitary conditioned on the register containing the eigenvalues. Schematically, 

\begin{equation*}
\ket{\psi} \xrightarrow{\text{H-sim}} e^{i\A t}\ket{\psi} \xrightarrow{\text{QPE}} \sum_{i=1}^N \psi_j \ket{a_j}\ket{\lambda_j} \xrightarrow{\text{Algo. (e.g. HHL) }} C \sum_{i=1}^N f(\lambda_j) \psi_j \ket{a_j}\ket{\lambda_j},
\end{equation*}
where $\A\ket{a_j}=\lambda_j\ket{a_j}$, $\ket{\psi}=\sum_{i=1}^N \psi_j \ket{a_j}$, and C is a normalisation factor. All three steps will have some associated error, and the last step will typically only succeed probabilistically. Choosing a precision $\epsilon>0$, the final state $\sum_{i=1}^N f(\lambda_j) \psi_j \ket{a_j}$ (having uncomputed the eigenvalue register coming from QPE) can be made $\epsilon$-close to $f(A)\ket{\psi}$, suitably normalised. 

Harrow, Hassidim and Lloyd \cite{Harrow2009QuantumEquations} used this method to devise an algorithm for solving a system of linear equations $A\vec{x}=\vec{b}$, in a formulation called the Quantum Linear Systems Problem (QLSP). The key step is matrix inversion, corresponding to the function $f(\lambda_j)=1/\lambda_j$. Given the oracle in (\ref{eqn:Aoracle}) for a Hermitian matrix $A$, and a state preparation map for an input vector $\vec{b}$, the HHL algorithm outputs a quantum state $\ket{\vec{x}}$ that is $\epsilon$-close in the $l_2$-norm to the normalised solution vector $A^{-1}\vec{b}/\norm{A^{-1}\vec{b}}$. The matrix inversion routine essentially implements conditional rotations to use phase-estimated eigenvalues to create the necessary amplitude factors. The transformation achieved by a simple conditional rotation is

\begin{equation*}
\ket{\mu}\ket{0} \xrightarrow{\text{CR}_y(2\cos^{-1}(\mu))} \ket{\mu} \left(\mu\ket{0}+\sqrt{1-\mu^2}\ket{1}\right),
\end{equation*}
where $\mu\in[0,1]$ and is represented to fixed precision in the first qubit register. 
Cao et.~al.~\cite{Cao2013QuantumEquation} use the HHL algorithm to solve the Poisson equation under some regularity assumptions, and give details of efficient quantum subroutines for the string of computations
\begin{equation*}
\ket{\lambda}\ket{0}\ket{0} \rightarrow \ket{\lambda}\ket{C/\lambda}\ket{0} \rightarrow
\ket{\lambda}\ket{C/\lambda}\ket{\cos^{-1}(C/\lambda)},
\end{equation*}
which are useful in matrix inversion. The actual function that is implemented is not quite $1/x$, but a carefully chosen filter function that matches with the inverse on the domain of interest, and ensures that the error can be kept under control. The choice of filter functions and their error analysis forms a major part of the work in designing the conditional unitary that implements the desired function.

\subsection*{Complexity and Hardness results for matrix inversion}
The HHL algorithm was originally presented as a method for the efficient estimation of averages or other statistical quantities associated with the probability distribution corresponding to the normalised solution vector of a linear system of equations. The advantage of the algorithm lies in being able to prepare this probability distribution as a quantum state, and the state may then be used to compute $\vec{x}^{\dagger}M\vec{x}:=\bra{x}M\ket{x}$ for operators $M$, or to sample from this distribution. 

When the matrix $A$ is $d$-sparse and efficiently row-computable, the algorithm has query complexity $\O(\textnormal{poly}(d, 1/\epsilon, \kappa))$. Here, $\epsilon$ is the accuracy to which the output state approximates the normalised solution vector, and $\kappa$ is the condition number of the matrix $A$. The condition number is given by the ratio of the largest to the smallest eigenvalues for a Hermitian matrix, and measures how invertible the matrix is: $\kappa\rightarrow\infty$ reflects an eigenvalue of the matrix going to zero, making it singular. The dependence of the algorithm on the condition number enters through the of filter function, which is chosen such that it matches $1/x$ on the domain $[-1,-1/\kappa]\cup[1/\kappa,1]$ and interpolates between these two pieces in $[-1/\kappa,1/\kappa]$. Consequently the success probability of the conditional rotation step also involves $\kappa$.

In the regime where the sparsity $d=\O(\log N)$, the HHL algorithm can be exponentially faster than the best known classical algorithms achieving the same results. It is still unknown whether better classical algorithms exist in the slightly modified framework of QLSP, i.e.,  the oracular setting where the goal is to compute global or statistical properties of the solution vector.

Nevertheless, the complexity of this algorithm was shown by \cite{Harrow2009QuantumEquations} to be nearly optimal (up to polynomial factors in the error) under the complexity theoretic assumptions $\cc{BQP}\neq\cc{PSPACE}$ and $\cc{BQP}\neq\cc{PP}$. The proof is an efficient reduction from simulating  a general quantum circuit to matrix inversion, establishing that matrix inversion is a $\cc{BQP}$-complete problem, so that the existence of a classical algorithm for this problem implies the ability to classically simulate quantum mechanics efficiently, which is widely believed to be impossible. The class $\cc{BQP}$ consists of problems that have a bounded error polynomial time quantum algorithm that succeeds with a constant probability $p\geq c > 1/2$, while $\cc{PP}$ consists of problems with probabilistic polynomial time classical algorithms with the success probability allowed to be arbitrarily close to $1/2$.

For the problem of preparing the state encoding the probability distribution corresponding to the solution vector, avoiding errors from the sampling step, Childs, Kothari and Somma \cite{Childs2015} designed an algorithm for QLSP with exponentially improved dependence on the precision parameter $\epsilon$. The key ingredient in their approach is the method of writing the target operator, $A^{-1}$ in this case, as a linear combination of unitary operators. We discuss this LCU method in the next section.


\section{The Linear Combination of Unitaries method}
\label{sec:QLSA-lcu}
One of the disadvantages in using QPE is that achieving $\epsilon$-precision requires $\O(1/\epsilon)$ uses of the matrix oracle. The LCU method offers a way to overcome this disadvantage by exploiting results from approximation theory. 

The LCU method is a way to probabilistically implement an operator specified as a linear combination of unitary operators with known implementations. In essence, we construct a larger unitary matrix of which the the matrix $f(A)$ is a sub-matrix or block. Childs and Wiebe \cite{Childs2012HamiltonianOperations} show how to implement a sum of two unitaries. We describe this simple case below.

Suppose $\A = \alpha_0 U_0 + \alpha_1 U_1$. Without loss of generality $\alpha_i>0$,  since phase factors can be absorbed into the unitaries. Consider a state preparation unitary $V_{\alpha}$ which has the action
\begin{align*}
\ket{0} &\mapsto \frac{1}{\sqrt{\alpha}}(\sqrt{\alpha_0}\ket{0} + \sqrt{\alpha_1}\ket{1}) \\
\ket{1} &\mapsto \frac{1}{\sqrt{\alpha}}(-\sqrt{\alpha_1}\ket{0} + \sqrt{\alpha_0}\ket{1}),
\end{align*}
where $\alpha=\alpha_0+\alpha_1$. When dealing with a linear combination of more than two unitaries, there is a lot of freedom in the choice of this $V_{\alpha}$, as we will see later.

Assume that we can perform the unitaries $U_0$ and $U_1$ controlled by an ancillary qubit, i.e., that we can apply the conditional unitary $U=\ket{0}\bra{0}\otimes U_0 + \ket{1}\bra{1}\otimes U_1$. Then for a state $\ket{\psi}$ for which we want $\A\ket{\psi}$, first attach an ancillary qubit and perform the map $V_{\alpha}$ on it, followed by $U$, and finally uncompute the ancilla with $V_{\alpha}^{\dagger}$. This results in the following transformation
\begin{eqnarray*}
\ket{0}\ket{\psi} &\xrightarrow{V_{\alpha}\otimes\id}& \frac{1}{\sqrt{\alpha}}(\sqrt{\alpha_0}\ket{0}+\sqrt{\alpha_1}\ket{1})\ket{\psi} \\
&\xrightarrow{U}& \frac{1}{\sqrt{\alpha}}(\sqrt{\alpha_0}\ket{0}U_0\ket{\psi}+\sqrt{\alpha_1}\ket{1}U_1\ket{\psi}) \\
&\xrightarrow{V_{\alpha}^{\dagger}\otimes\id}& \frac{1}{\alpha}\left(\ket{0}\left(\alpha_0U_0+\alpha_1U_1\right)\ket{\psi}+\sqrt{\alpha_0\alpha_1}\ket{1}\left(U_1-U_0\right)\ket{\psi}\right).
\end{eqnarray*}
Measuring the ancilla and getting the outcome $0$ will leave behind the state $\A\ket{\psi}$, up to normalisation. Getting a measurement outcome of $1$ means the algorithm fails. The probability of failure can be easily bounded, as $p(fail)\leq \frac{\alpha_1\alpha_0\Norm{U_1-U_0}^2}{\alpha^2}\leq \frac{4\alpha_1\alpha_0}{\alpha^2}$, since the distance between two unitaries is at most $2$.

\subsection*{Success probability and complexity}
In most cases of interest, the probability of success can be increased using amplitude amplification, by repeating the procedure $\O(\alpha/\Norm{\A\ket{\psi}})$ times, when we have an estimate of the norm in the denominator.
This gives a quantum algorithm that prepares the desired quantum state with constant success probability and outputs a single bit indicating whether it was successful. Indeed, when the LCU method is used to implement non-unitary operations such as in matrix inversion, a significant contribution to the complexity comes from the fact that usually the success probability is small. 

This framework was investigated in detail by Berry, Childs, Kothari and coworkers. The resulting method of implementing linear combinations of unitaries makes it straightforward to translate results on approximating real functions into implementations for matrix functions: since Hermitian matrices have all real eigenvalues, we just need to find a good approximation to the real function $f(x)$. The overall query complexity of the algorithm will depend on the $1$-norm or weight $\alpha$ of the coefficients of the linear combination, the number of terms $m$, and the least eigenvalue of the matrix function $f(A)$. 
  
Thus, finding a new algorithm boils down to optimising the first two parameters, and getting good bounds on the eigenvalues of $f(A)$.
We also need to choose the basis functions used in the approximation in such a way that on translating the statement to matrix functions, we get unitaries which we know how to implement - for example, a fourier series in $e^{ixt}$ gives rise to unitaries $e^{iAt}$ that can be implemented via Hamiltonian simulation techniques.
 
 \cite{Childs2015} used this method for the function $f(x)=1/x$ to obtain improved error dependence of the algorithm for the QLSP. The dependence of the complexity of their algorithm on the precision parameter is $\O(\log(1/\epsilon))$, an exponential improvement over the precision dependence in \cite{Harrow2009QuantumEquations}. This is achieved by carefully choosing the approximating series and its truncation, such that with $m=\O(\log(1/\epsilon))$ terms, $\epsilon$ precision is achieved. 

For completeness, a full description of the LCU method is included in Appendix \ref{app:LCU}. We now turn to the most recent method of implementing matrix functions, the method of qubitization and quantum signal processing, before returning to the LCU method in section \ref{sec:LCU_Chebyshev}.


\section{Qubitization or the Block-encoding method}
\label{sec:qubitization}

Introduced by Low and Chuang \cite{Low2016,Low2017OptimalProcessing}, this framework subsumes and generalises the LCU method. The method can be split into two steps - qubitization and quantum signal processing. The idea behind quantum signal processing is to ask what kinds of $2\times 2$ (block) unitaries can be obtained by iterating a given unitary operator, interleaving the iteration with single qubit rotations through different angles. Thus, picking a sequence of phases $\phi_0,\phi_1,\ldots,\phi_k$, we ask what class of untaries can be represented in the form 
\[
W_{\vec{\phi}}=e^{i\phi_0\sigma_z}U e^{i\phi_1\sigma_z}U\ldots Ue^{i\phi_k\sigma_z},
\]
where $U$ is the input unitary. When the input $U=U(x)$ is parametrised by some $x\in[-1,1]$, we can enforce $W_{\vec{\phi}}$ to have the form
\[
W_{\vec{\phi}}=\begin{pmatrix}
P(x) & Q(x) \\
Q^*(x) & P(x) 
\end{pmatrix}
\]
for some functions $P(x)$ and $Q(x)$, and work out what properties these functions can have. We can also calculate what choice of phases $\phi_j$ give rise to a certain $P$ and $Q$. This then provides a method for constructing functions of $U$ using iteration alternating with a sequence of single qubit rotations.

Qubitization, as the name suggests, is a technique of obtaining a suitable block encoding of an input matrix $A$, on which quantum signal processing can then be applied. The input is an $(m+n)$-qubit unitary matrix $U$ that encodes an $n$-qubit normal operator $A$ (i.e. $AA^{\dagger}=A^{\dagger}A$) in its top left block. More precisely, given a signal state $\ket{G}=\hat{G}\ket{0}$ that flags a subspace of the $n$-dimensional signal space, the input unitary $U$ has the block form
\begin{equation*}
U=\begin{pmatrix}
A & \cdot \\
\cdot & \cdot
\end{pmatrix},
\end{equation*}
i.e. $\braket{G|U|G}=A$ is the encoded operator (where we assume $A$ is normalised so that $\Norm{A}\leq 1$). Since normal operators have a spectral decomposition, we can write $A=\sum_{\lambda}\lambda e^{i\theta_{\lambda}}$. The goal is to use $U,G$, their controlled versions and conjugates to obtain a unitary $W$ that can be expressed as a direct sum over $SU(2)$ invariant subspaces, i.e., to `qubitize' $U$:
\begin{equation*}
W=\begin{pmatrix}
A & -g(A) \\
g(A) & A^{\dagger}
\end{pmatrix}
	= \sum_{\lambda} \begin{pmatrix}
\lambda e^{i\theta_{\lambda}} & -\sqrt{1-|\lambda|^2} \\
\sqrt{1-|\lambda|^2} & \lambda e^{-i\theta_{\lambda}}
\end{pmatrix}\otimes \ket{\lambda}\bra{\lambda}.
\end{equation*}
The unitary $W$, called an iterate, has an $SU(2)$ invariant subspace that contains the signal state $\ket{G}$, so that $\braket{G|W|G}=A$. This iterate can be used to approximately implement a wide range of matrix functions in the form $\braket{G|W_{\vec{\phi}}|G}=f(A)+ig(A)$, where $f,g$ are real functions, and $\vec{\phi}$ represents the sequence of phases $\phi_j$. The construction corresponds to quantum signal processing when the functions $f$ and $g$ have opposite parity, in which case the matrix function is implemented as a $2\times 2$ block unitary. Furthermore, the algorithms using this technique generally achieve polylogarithmic dependence on precision, and are shown to have optimal query complexity in several cases. The ancillary qubit requirement is brought down significantly since the primitive gates used are single qubit rotations, and controlled versions of the input unitaries. The iterate $W$ is similar to the quantum walk operator used with the Chebyshev decompositions for the LCU method. Indeed, when $A$ is hermitian, $W$ is exactly the same as that walk operator. 

The optimal Hamiltonian simulation algorithm based on the qubitization method \cite{Low2017OptimalProcessing} can be used to speed up any algorithm that uses Hamiltonian simulation as a subroutine.  Chakraborty, Gily\'en and Jeffery \cite{Chakraborty2018TheSimulation} have applied the qubitization method, also called the block encoding method, to obtain an improved Hamiltonian simulation technique for non-sparse matrices, improved quantum algorithms for linear algebra applications such as regression, and for estimating quantities like effective resistance in networks. Furthermore, they note that results in the block-encoding framework apply also when the input is specified in the quantum data structure model (e.g. QRAM) rather than as a unitary oracle. This connection is established through a result of Kerenedis and Prakash \cite{KP2017}, which shows that if a matrix $A$ is stored in a quantum data structure such as QRAM, an approximate block encoding for it can be implemented with polylogarithmic overhead.

The spectral and LCU methods were originally developed for and apply to Hermitian matrices, taking advantage of the fact that they have a spectral decomposition with real eigenvalues. Qubitization and quantum signal processing can deal with normal operators as well, which have complex eigenvalues. Very recently, \cite{GSLW2018} have further generalised the qubitization method using the idea of singular value decompositions. They show how matrix arithmetic and a variety of linear algebra applications can be achieved in a unifying framework of what they call `quantum singular value transformation'. Given any rectangular matrix, one has a singular value decomposition $A=U\Sigma V^{\dagger}$, where $\Sigma$ is the diagonal matrix of singular values, and $U$ and $V$ are orthogonal matrices with columns being the left and right eigenvectors of $A$ respectively. Essentially, \cite{GSLW2018} construct a method to implement functions of the singular values, i.e., functions defined by $f(A):=Uf(\Sigma)V^{\dagger}$. Their extensive analysis also shows how the method achieves optimal complexity for a wide variety of quantum algorithms.

In the last three sections we have briefly looked at the methods used for implementing smooth functions of (hermitian) matrices on a quantum computer. In the rest of this article, we analyse a certain combination of these methods, using the LCU technique with Chebyshev polynomial approximations and a quantum walk construction. 


\section{Implementing smooth functions using the LCU method}
\label{sec:LCU_Chebyshev}

Given a hermitian matrix $A$ and a smooth function $f:[-1,1] \to \mathbb{R}$, we describe below a quantum algorithm that implements the evolution proportional to $f(A)$ for any input state $\ket{\psi}$. 

\begin{theorem}
Consider a quantum oracle for a $d$-sparse hermitian matrix $A$ acting on $n$-qubits with $\Norm{A}\leq 1$, and a function $f:[-1,1] \to \mathbb{R}$. If $f$ has a Taylor series $f(x)=\sum_{}\alpha_i x^i$ about $x_0=0$, then for any $n$-qubit state $\ket{\psi}$, there is a quantum circuit that probabilistically prepares a state $\ket{\tilde{\psi}}=\left(\bra{0^t}\otimes\id\right)U\ket{0^t}\ket{\psi}$, where $U$ is a unitary acting on $t$ ancilla and $n$ system qubits, such that 
\begin{equation}
\Norm{\ket{\tilde{\psi}}-\frac{f(A)\ket{\psi}}{\Norm{f(A)\ket{\psi}}}}<\epsilon
\end{equation}
using $\O(L)$ queries to the oracle for $A$, with a success probability $p\geq \mu/\gamma$, where
\begin{equation}
L>\log\frac{C}{\epsilon},~~~\mu=\inf_x|f(x)|,~~~and~~~\gamma=\sum_{i=1}^Ld^i|\alpha_i|.
\end{equation}
The circuit also outputs a flag indicating success. 
\end{theorem}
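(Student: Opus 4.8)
The plan is to follow the Chebyshev-polynomial LCU pipeline of \cite{Childs2015} in three stages: (i) replace $f$ by a low-degree truncation of its Taylor series, (ii) re-expand that polynomial in Chebyshev polynomials of the rescaled matrix $A/d$, which are precisely the operators produced by the sparse-matrix quantum walk of \cite{Berry2012}, and (iii) assemble the resulting walk powers into a single unitary $U$ via the LCU construction of Appendix~\ref{app:LCU}, post-selecting on the ancilla register. To control the truncation I would set $f_L(x)=\sum_{i=0}^{L}\alpha_i x^i$ and apply Taylor's theorem with Lagrange remainder: since $C$ bounds every derivative of $f$ on $[-1,1]$, one has $\sup_{|x|\le 1}\Abs{f(x)-f_L(x)}\le C/(L+1)!$. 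Demanding this be below $\epsilon$ and using $\log((L+1)!)=\Omega(L\log L)$, the crude condition $L>\log(C/\epsilon)$ already suffices, which fixes the degree and hence the $\O(L)$ query count. By the spectral definition (\ref{eqn:matrixFnDefn}) and $\Norm{A}\le 1$, passing from $f$ to $f_L$ changes $f(A)\ket{\psi}$ by at most $\epsilon$ in $l_2$-norm.

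The key algebraic step, and the source of the factor $d^i$, is the re-expansion. The walk operator $W$ built from the oracle $\mathcal{P}_A$ acts, on the invariant subspace containing the relevant signal state, as the Chebyshev iterate of the rescaled matrix, so that its $k$-th power realises $T_k(A/d)$; this is the same iterate discussed in Section~\ref{sec:qubitization}, and each application of $W$ costs $\O(1)$ queries to $\mathcal{P}_A$. I would therefore write $A^i=d^i (A/d)^i$ and expand $(A/d)^i=\sum_k c_{ik}\,T_k(A/d)$. Because the Chebyshev coefficients $c_{ik}$ of a monomial are nonnegative and, evaluating at $x=1$ where every $T_k(1)=1$, satisfy $\sum_k c_{ik}=1$, the combined coefficients of $f_L(A)=\sum_k\beta_k\,T_k(A/d)$ obey $\sum_k\Abs{\beta_k}\le\sum_i d^i\Abs{\alpha_i}=\gamma$. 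The largest power needed is $W^L$, giving $\O(L)$ queries, and the state-preparation register uses only $t=\O(\log L)$ ancilla qubits.

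With these ingredients I would invoke the LCU lemma. Using a state-preparation map encoding the normalised weights $\{\beta_k/\gamma\}$ together with the controlled powers $W^k$, the unitary $U$ sends $\ket{0^t}\ket{\psi}$ to $\tfrac{1}{\gamma}\ket{0^t}\,f_L(A)\ket{\psi}$ plus an orthogonal remainder flagged by a nonzero ancilla value. Hence the post-selected branch $\ket{\tilde{\psi}}=(\bra{0^t}\otimes\id)U\ket{0^t}\ket{\psi}$ equals $\tfrac{1}{\gamma}f_L(A)\ket{\psi}$, which (after the normalisation built into the construction) is within $\epsilon$ of the target $f(A)\ket{\psi}/\Norm{f(A)\ket{\psi}}$. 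The amplitude landing in the success branch is $\Norm{f_L(A)\ket{\psi}}/\gamma$, and since $\Norm{f(A)\ket{\psi}}\ge\min_j\Abs{f(\lambda_j)}\ge\inf_x\Abs{f(x)}=\mu$, this is what produces a bound governed by $\mu/\gamma$.

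I expect the success-probability accounting to be the main obstacle. The raw probability of measuring $0^t$ is $\Norm{f_L(A)\ket{\psi}}^2/\gamma^2$, so reconciling it with the stated $p\ge\mu/\gamma$, and correctly pairing the sub-normalised post-selected vector with the \emph{normalised} target $f(A)\ket{\psi}/\Norm{f(A)\ket{\psi}}$, requires carefully tracking how amplitude amplification (with its $\O(\gamma/\mu)$ overhead) is folded into the final statement. Getting the normalisation and the $\mu$-versus-$\mu^2$ bookkeeping right is the delicate part; the approximation-theory truncation and the walk-to-Chebyshev correspondence are otherwise standard.
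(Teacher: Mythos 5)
Your proposal follows essentially the same route as the paper's own argument: truncate the Taylor series at $L>\log(C/\epsilon)$ via the Lagrange remainder, re-expand the monomials in Chebyshev polynomials of $x/d$ using positivity and $\sum_j C_{kj}=1$ to get the coefficient weight $\sum_i d^i|\alpha_i|=\gamma$, realise each $\T_k(A/d)$ by $k$ steps of the sparse-matrix quantum walk at $\O(1)$ queries per step, and assemble everything with the LCU construction and post-selection. The delicate point you flag is real but lies in the theorem statement rather than in your argument --- the paper's own proof derives exactly your raw success probability $p=\bigl|\Norm{f(A)\ket{\psi}}/\gamma\bigr|^2\geq(\mu/\gamma)^2$, handles the pairing with the normalised target via Lemma~\ref{lem:CDapprox} (at the cost of a constant $c'=\Omega(1/\mu)$ multiplying $\epsilon$), and treats the $\O(\gamma/\mu)$ amplitude-amplification overhead as a separate add-on, so the stated bound $p\geq\mu/\gamma$ is the statement's own imprecision and not something your construction needs to achieve.
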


Here $C>0$ is an upper bound on the magnitudes of the derivatives of $f$ in $(-1,1)$, and $\mu$ is the eigenvalue of $f(A)$ with the least magnitude on the domain of interest (i.e. on the spectrum of $A$).

For simplicity, we focus on the case where $\Norm{A}\leq 1$. The case when $1<\Norm{A}\leq\Lambda$ (and correspondingly the domain of $f$ is $[-\Lambda,\Lambda]$) can be reduced to this case by scaling appropriately. 

\subsection{Chebyshev series}
The circuit in Theorem $1$ is obtained using the LCU method. The first step is to derive an approximation for $f(A)$ in terms of Chebyshev polynomials. The matrix functions corresponding to these Chebyshev polynomials are then performed on a quantum computer using a quantum random walk. 

The Chebyshev polynomials of the first kind, denoted $T_n$ where $n$ is the degree, are orthonormal polynomials on $[-1,1]$ with weight function $(\sqrt{1-x^2})^{-1}$. We collect a few relevant facts about these polynomials in Appendix \ref{app:cheb}. In particular, we use two nice properties of Chebyshev polynomials in the quantum algorithm. The first is that monomials $x^k$ on $[-1,1]$ can be exactly represented as a finite sum of Chebyshev polynomials
\begin{equation}
\label{eqn:chebxk}
x^k=\sum_{j=0}^{k}C_{kj}\T_j(x),
\end{equation}
where the coefficents are given by
\begin{align}
\label{eqn:Ckj}
C_{kj}=\begin{cases}
      \displaystyle\frac{1}{2^{k-1}}\binom{k}{(k-j)/2}, &\text{if $(k-j)$ is even} \\
      0, &\text{otherwise},
      \end{cases}
\end{align}
for $j>0$, and $C_{k0}=\frac{1}{2^k}\binom{k}{k/2}$ is non-zero when $k$ is even. The second useful property is that the coefficients are positive and sum up to $1$
\begin{equation}
\label{eqn:chebCoefSum}
\sum_{j=0}^{k}C_{kj}=1,
\end{equation}
which can be seen using $\T_n(\cos \theta)=\cos n\theta$. 
We use these properties to write down a truncated Chebyshev series for $f(x)$, based on the Talyor series, which will lead to a simple expression for the success probability in the LCU method. 

For a smooth function $f$ on the interval $[-1,1]$, consider the Taylor series $x_0=0$, (also called the Maclaurin series)
\begin{equation}
\label{eqn:taylor_f}
f(x)=\sum_{i=0}^{\infty} \alpha_i x^i,
\end{equation}
where $\alpha_i=\frac{f^{(i)}(0)}{i!}$ is the $i^{th}$ Taylor coefficient, $f^{(i)}$ denoting the $i^{th}$ derivative of $f$. Suppose the radius of convergence of the series is some $r>0$. Truncating this series for some finite integer $L$, we get the Taylor polynomial, with truncation error bounded by Taylor's theorem
\begin{align}
\label{eqn:taylor_ftilde}
\tilde{f}&(x) = \sum_{i=0}^{L-1} \alpha_i x^i, \\
\forall x \in (-1,1),\ &\Abs{f(x) -\tilde{f}(x)} \leq \frac{f^{(L)}(\xi_L)}{(L)!}|\xi_L|^L,
\end{align}
for some $\xi_L \in (-1,1)$. Let us denote by $\alpha:=\sum_{i=0}^{L-1}|\alpha_i|$ the 1-norm of the coefficients. If we have bounds on the derivatives of $f$, or if we can bound the tail of the series as for the exponential function, we can get a good bound on the truncation error. This will enable us to quantify the rate of convergence of the series, and to decide the order of truncation based on the desired precision. 

A simple assumption that we can make about the derivatives of $f$ is that they are bounded in magnitude by some known constant $C$, i.e. $\sup_x|f^{(i)}(x)|\leq C$ for all $i$ and  $\forall x\in(-1,1)$. This holds for Schwartz functions, for example.  It is then a simple calculation to show that taking $L > \log_2 (C/\epsilon)$ ensures that the truncation error will be smaller than $\epsilon$, as long as $L>2e$, i.e. $L\geq 6$. \footnote{If we instead assume the weaker condition that $\sum_{i=0}^{\infty} |\alpha_i|<B$, and that the series converges in a $(1-\delta)$-ball around $x_0$, the corresponding truncation order is $L>\frac{1}{\delta}\log\frac{B}{\epsilon}$.}.

Now using (\ref{eqn:chebxk}) to represent the monomials $x^i$ exactly as a finite sum of Chebyshev polynomials, we obtain a truncated Chebyshev series for $f$. We have
\begin{align}
\label{eqn:cheb_f}
\tilde{f}(x) &=\sum_{i=0}^{L-1}\sum_{j=0}^i \alpha_i C_{ij}\T_j(x) \nonumber\\
			&=\sum_{j=0}^{L-1} \beta_j \T_j(x),
\end{align}
with $\beta_j = \sum_{i=j}^{L-1} \alpha_i C_{ij}$. From the observation (\ref{eqn:chebCoefSum}) above $\sum_{i=0}^k C_{ik}=1$, so
${\bf ||\beta||_1}:=\beta=\sum_{i=0}^{L-1} |\beta_i| = \sum_{i=0}^{L-1} |\alpha_i|$. The probability of success in the LCU implementation depends on this sum of coefficients, and we note that rewriting the Taylor polynomial as a Chebyshev decomposition does not by itself increase the weight of the coefficients.

However, taking $n$ steps of the quantum walk described in Appendix \ref{app:walk} results in the transformation $\ket{0^m}\ket{\psi}\mapsto \ket{0^m}\T_n(A/d)\ket{\psi}+\ket{\Phi^{\perp}}$. That is, the quantum walk implements the operator $\T_n(A/d)$ rather than $\T_n(A)$. To account for this, we further rewrite the series (\ref{eqn:cheb_f}) as
\begin{align}
\label{eqn:cheb_f_d}
\tilde{f}(x) &= \sum_{i=0}^{L-1} d^i \alpha_i \cdot\left(\frac{x}{d}\right)^i 
			\nonumber \\
			&=\sum_{j=0}^{L-1} \gamma_j \T_j\left(\frac{x}{ d}\right),
\end{align}
with $\gamma_j = \sum_{i=j}^{L-1} d^i \alpha_i C_{ij}$. This results in an increase in the weight of the coefficients: ${\bf ||\gamma||_1}:=\gamma=\sum_{i=0}^{L-1} |\gamma_i| = \sum_{i=0}^{L-1} | d^i \alpha_i|$. The truncation error does not change since we are expanding around $x_0=0$ and  simply rescaling the coefficients and argument of the series.

Finally, when  $1<\Norm{A}\leq \Lambda$, we can scale down the interval $[-\Lambda,\Lambda]$ to $[-1,1]$ using the map $x\mapsto x/\Lambda$. Hence the first step is to write the Taylor series in the larger interval, and then rewrite it by scaling the coefficients as in $\tilde{f}(x)=\sum_{i=0}^{L-1} \Lambda^i f_i \cdot (x/\Lambda)^i$. Accordingly, the weight or $1$-norm of the coefficients will increase to $\gamma=\sum_{i=0}^{L-1} |(\Lambda d)^i \alpha_i|$.

\subsection{Algorithm description and Complexity}
By Lemma $\ref{lem:CDapprox}$ (Appendix \ref{app:LCU}), we can implement $f(A)$ approximately by using the linear combination of Chebyshev polynomials in Eq. (\ref{eqn:cheb_f_d}), using the LCU method as described in Lemma \ref{lem:LCU_fApprox} (Appendix \ref{app:LCU}). We thus have a quantum algorithm which for an input state $\ket{\psi}$ produces a state $\ket{\tilde{\phi}}$ such that 
\begin{equation}
\Norm{\frac{f(A)\ket{\psi}}{\Norm{f(A)\ket{\psi}}}-\ket{\tilde{\phi}}} \leq c'\epsilon,
\end{equation}
for some constant $c'=\Omega(\frac{1}{\mu})$. The algorithm uses $\O(L)=\O(\log\frac{C}{\epsilon})$ queries to the matrix oracle, and succeeds with probability $p:=\left|\frac{\Norm{f(A)\ket{\psi}}}{\gamma}\right|^2\geq \left(\frac{\mu}{\gamma}\right)^2$, outputting the flag $0$ on success. Here $\mu$ is the eigenvalue of $f(A)$ with the least magnitude on the domain of interest.

Typically, amplitude amplification is used to boost the probability of success to a constant. The simplest setting where this is possible is when a state preparation map for for the input state $\ket{\psi}$ is available. Using Lemma \ref{lem:LCU_fApprox}, an upper bound on the worst-case query complexity of this implementation when amplitude amplification is used to boost the probability of success is given by 
\begin{equation*}
\frac{L}{\sqrt{p}} \leq L\frac{\gamma}{||f(A)\ket{\psi}||_{min}}=\O\left(L(\Lambda d)^{L-1} \frac{\alpha}{\mu}\right),
\end{equation*}
where $\Norm{A}\leq \Lambda$, and $\gamma:=\sum_{i=0}^{L-1} |(\Lambda d)^i f_i|=\O((\Lambda d)^{L-1}\alpha)$ and $\Norm{f(A)\ket{\psi}}\geq \mu$. In fact, $\gamma\approx f(\Lambda d)$. The linear factor of $L$ comes from the fact that we need to implement the Chebyshev polynomials of degree up to $L$ using the quantum walk. The factor $\gamma/\mu$ comes from using amplitude amplification to increase the success probability of obtaining the desired state. A simple lower bound on $\mu$ is $f_{min}:=\inf_{x}|f(x)|\leq \mu\leq \Norm{f(A)\ket{\psi}}$ over $x\in[-1,1]$. Thus if $f$ is such that $|f(x)|\geq 1$, this factor can be omitted from the complexity. The implementation is invariably expensive when $f(A)$ has eigenvalues close to zero.  

Thus, if amplitude amplification is used, plugging in the expression for $L$ gives
\begin{equation}
\label{eqn:comp1}
\O\left(\frac{\alpha}{\mu}\left(\frac{C}{\epsilon}\right)^{\log(\Lambda d)}\log\frac{C}{\epsilon}\right)
\end{equation} 
queries to the oracle for $A$, and $\O\left(\frac{\alpha}{\mu}\left(\frac{C}{\epsilon}\right)^{\log(\Lambda d)}\right)$ uses of the input state preparation map.

The gate complexity can be obtained by multiplying the query complexity by the gate complexity of performing one step of the quantum walk. From \cite{Berry2015HamiltonianParameters}, one step of the walk costs only a constant number of queries and $\O\left(\log N+m^{2.5}\right)$ $2$-qubit gates, where $m$ is the number of bits of precision used for the entries of the matrix $A$. 
For completeness, note that the control state preparation map $V$ in the LCU method can be constructed using the method of \cite{Grover2002CreatingDistributions}, since the coefficients are known.


\section*{Related work}
As noted previously, \cite{vanApeldoorn2017QuantumBounds} provide a constructive method for implementing smooth functions of Hermitian matrices, based on transforming the Taylor series for the function into a Fourier series. The algorithm is then obtained by using the LCU method and Hamiltonian simulation to implement the Fourier components. For comparison, we quote below the results as described in Theorem $40$ of their paper. 

Given the Taylor series of $f$ about some point $x_0$, $f(x_0+x)=\sum_{i=0}^{\infty}a_ix^i$, with convergence radius $r>0$, if an $n$-qubit Hermitian operator $A$ satisfies $\Norm{A-x_0\id}\leq r$, then for $\epsilon\in(0,\frac{1}{2}]$ we can implement a unitary $\tilde{U}_f$ on $t+n$-qubits such that for any $n$-qubit state $\ket{\psi}$, we have
\begin{equation*}
\displaystyle\Norm{\left(\bra{0^t}\otimes\id\right)\tilde{U}_f\ket{0^t}\ket{\psi}-\frac{f(A)}{B}\ket{\psi}} \leq \epsilon,
\end{equation*}
where $\sum_{i=0}^{\infty}|a_i|(r+\delta)^i\leq B$ for some finite $B>0$ and $\delta\in(0,r]$. 
If $\Norm{A}\leq K$, $r=\O(K)$, and $A$ is $d$-sparse and accessible using an oracle as in (\ref{eqn:Aoracle}), then the whole circuit can be implemented using 
\[
\O\left(\frac{Kd}{\delta}\log\left(\frac{K}{\delta\epsilon}\right)\log\left(\frac{1}{\epsilon}\right)\right)
\] queries to the oracle. The number of $3$-qubit gates used for the circuit is larger by the polylogarithmic factor $\O\left(\log N+\log^{2.5}\left(\frac{K}{\delta\epsilon}\right)\right)$. 

We notice that the query complexity in this method depends linearly on $d||A||$, in addition to the factor $\log 1/\epsilon$. In the Chebyshev method, the dependence on $d$ and $||A||$ comes in through the success probability, leaving the query complexity dependent only on the properties of the Taylor series approximation. 

The methods based on Fourier and Chebyshev series cannot be directly compared because they may not be usable in all situations - the Fourier method is possible whenever Hamiltonian simulation can be performed for the matrix $A$, while the Chebyshev method is possible only when the quantum walk in Appendix \ref{app:walk} is feasible to implement. Since Hamiltonian simulation can be performed using quantum walks, the Fourier method has a wider range of applicability, in general. 

The probability of success in preparing the desired state by postselection on the $t$-ancillary qubits is as expected in the LCU method, given by $\left|\frac{\Norm{f(A)\ket{\psi}}}{B}\right|^2$.


\section*{Advantages of the Chebyshev method}
In comparison to the method based on Fourier series approximations, the Chebyshev series based method described here has the following advantages.
\begin{enumerate}
\item Since the quantum walk exactly produces the effect of Chebyshev polynomials in $A/d$ (apart from a choice of precision in representing the entries of the matrix), and polynomials have exact representations as a finite sum of Chebyshev polynomials, we can exactly implement polynomial functions of a matrix. In the Fourier series based approach, the series truncation adds another layer to the error in the approximation of polynomials. The simplicity of the analysis also makes it apparent that the Chebyshev method could be more suitable for applications involving polynomial functions, such as iterative methods that use Krylov subspaces.

\item Leaving out amplitude amplification, the query complexity depends only on the degree of the approximating polynomial. This isolation of the dependence of the complexity on the norm of the matrix and its sparsity into the success probability could be useful in studying lower bounds on the query complexity for different matrix functions.

\item Methods based on quantum walks extend to non-sparse matrices, since they do not depend on row computability \cite{Berry2012}. The complexity will generally be worse, however. 

\item The classical calculation of the coefficients is particularly simple for the Chebyshev series.
\end{enumerate}

There are, of course, both advantages and disadvantages of using any method. Chebyshev polynomial implementation uses quantum walk methods, and the construction of the walk requires a doubling of the input space, i.e., $\O(n)$ ancillary qubits. Furthermore, the rescaling of the Taylor series coefficients means that machine errors due to fixed-precision representation are magnified. To work at precision $\epsilon$, we need to use $\Omega(\log_2(B/\epsilon))$ bits for the matrix entries.


\section{Special function classes}
The main difficulty in the approach we have described is the scaling up of the weight of the coefficients in the Taylor series approximation due to the fact that the Chebyshev polynomials obtained from the quantum walk are in $A/d$ rather than $A$. This affects the success probability, potentially necessitating many rounds of repetition or amplitude amplification. The step that requires this rescaling is the reconstruction of $f(x)$ using an approximation to $f(x/d)$ for a fixed $d>1$. 

But we notice that there are simple functions for which different approaches are possible. For example, for $f(x)=1/x$, it suffices to take $f(x)=\frac{1}{d}f(\frac{x}{d})$, which holds throughout the domain of $f$. Another example is $e^x = (e^{x/d})^d$, which requires repeated application of the operator, $d$ times. We also need to keep track of how the approximation error changes in going from $f(x/d)$ to $f(x)$. 

In general, if there is a function $g:\R \to \R$ such that $f(x)=g(f(x/d))$, the efficiency of implementing of $f(A)$ using $f(A/d)$ depends on the nature of $g$. For some classes of functions, the composition of the function with $g$ reduces to just scaling: $f(x)=g(f(x/d))=g(d)f(x/d)$. Homogeneous functions are an example of this kind: $f(cx)=c^kf(x)$ for a fixed constant $k$ for any real number $c$, so $g(d):=d^k$. Homogenous functions arise mainly as (multivariate) polynomials or rational functions. Below, we make some brief remarks about matrix polynomials and exponentiation.

\subsection*{Polynomials}
For the special case of monomials, which are homogeneous functions, we simply use the Chebyshev decomposition \eqref{eqn:chebxk}, which is exact. This eliminates the precision parameter $\epsilon$. For $f(x)=x^k$ on $[-1,1]$, the query complexity is $\O(k)$, or if amplitude amplification is used, $\O\left(kd^{k-1}\lambda^{-k}\right)$ where $\lambda=||A^{-1}||$ is the least eigenvalue of $A$ (where we assume $A$ does not have $0$ as an eigenvalue). For any polynomial of degree $k$, the complexity is the same to leading order, since the highest degree Chebyshev polynomial comes from the highest degree term in the polynomial. Computing matrix polynomials may find use in iterative methods in numerical linear algebra. These methods typically proceed by approximating a vector $f(A)\vec{v}$ in a Krylov subspace $\mathcal{K}_r(A,\vec{v_0}):=\set{\vec{v_0}, A\vec{v_0}, A^2\vec{v_0},\ldots,A^{r}\vec{v_0}}$ starting with an initial guess $\vec{v_0}$, and iteratively improving it. Patel and Priyadarsini \cite{Patel2017EfficientApplications} propose a matrix inversion algorithm using this method. However, they use a different formalism to quantumly implement the monomials. 

\subsection*{The exponential function}
The matrix exponential is an immensely important function, primarily in the form of the complex exponential $e^{iAt}$ that describes quantum evolution under the Hamiltonian operator $A$. Hamiltonian simulation is important in a variety of applications, ranging from quantum chemistry, to use as a subroutine in linear algebra and machine learning applications. This vast topic has received a lot of attention in the last two decades, so we shall not endeavour to elaborate on it here.

The simple exponential $e^A$ is also an important function. For example, being able to sample from the Gibbs' state $e^{-H}/\text{Tr}(e^{-H})$ has been found useful in quantum algorithms for semidefinite programming \cite{Brandao2017ExponentialLearning,VanApeldoorn2018ImprovementsApplications}. Exponentiating density matrices can be used to construct a quantum algorithm for principal component analysis \cite{Lloyd2014QuantumAnalysis}.  Matrix exponentiation is also expected to be useful in variational quantum chemistry algorithms, for example in implementing coupled cluster techniques \cite{Romero2017StrategiesAnsatz}. 

Using our approach, to implement the exponential $e^A$ of the hermitian matrix $A$, we can first approximate $e^{A/d}$ and repeat this operation $d$ times. If $||A||\leq 1$, this leads to a query complexity of $\O(d\log 1/\epsilon)$. However, the success probability decays exponentially and thus many rounds of amplitude amplification may be required. Following the construction in section \ref{sec:LCU_Chebyshev}, we note that $\gamma=e^{d}$ to order $\epsilon$, so the complexity of using amplitude amplification can only be constrained to $\O(e^{d})$. 

Many authors have considered the problem of matrix exponentiation previously. Some have considered the problem of solving a system of linear ordinary differential equations \cite{Berry2014High-orderEquations,Berry2017}, while others have focused on the problem of Gibbs' state preparation \cite{Chowdhury2016}. \cite{vanApeldoorn2017QuantumBounds} also describe an algorithm for this problem, which they use for Gibbs' sampling in quantum SDP algorithms. \cite{Patel2017EfficientApplications} give an algorithm that also uses the Chebyshev expansion of the exponential function, but their method of implementation is based on the recursion relation for Chebyshev polynomials, and uses a digital encoding of quantum states.


\section{Conclusions}
We have shown a simple calculation motivated by the approximation of real functions by Chebyshev series to illustrate the use of quantum walks with the LCU method to implement a wide variety of smooth functions. The method is particularly simple, as the approximating linear combination is obtained from a truncated Taylor series that is transformed into a Chebyshev series. Although we do not present any improvements in complexity results, the simplicity of the method may make it attractive from a pedagogical viewpoint. 

Although the LCU method has been widely investigated over the last few years, there are still a few interesting questions related to it - for example, the only unitaries found useful so far are tensor products of Pauli operators, Fourier basis terms, and Chebyshev polynomials, because these can be implemented using known methods (Hamiltonian simulation and quantum walks). It would be interesting to study other families of unitary circuits that can be used in conjunction with this method. In particular, for a special case such as say matrix exponentiation, is it possible to systematically determine an optimal basis of unitaries? This would also be related to lower bounds on the query complexity of implementing different matrix functions. 

\section*{Acknowledgements}
SS is supported by a Cambridge-India Ramanujan scholarship from the Cambridge Trust and the SERB (Govt. of India). RJ is supported in part by the ERANET cofund project QuantAlgo.


\newcommand{\etalchar}[1]{$^{#1}$}

\appendix
\section{Using the LCU method to approximate $f(\A)$ for a hermitian matrix $\A$}
\label{app:LCU}
We describe below the LCU method for approximately implementing a linear combination of unitary matrices. This description is based on \cite{Childs2015}.

First, we need to make sure that approximating a hermitian operator $C$ by another operator $D$ in spectral norm ensures that they produce states $C\ket{\psi}/\Norm{C\ket{\psi}}$ and $D\ket{\psi}/\Norm{D\ket{\psi}}$ that are close in the Hilbert space norm, for any state $\ket{\psi}$.

\begin{lemma}
\label{lem:CDapprox}
Let $C$ be a Hermitian operator whose eigenvalues satisfy $\Abs{\lambda}\geq 1$, and $D$ be an operator satisfying $\Norm{C-D}\leq\epsilon<1/2$. Then for any state $\ket{\psi}$, 
\begin{equation}
e(\psi):=\Norm{\frac{C\ket{\psi}}{\Norm{C\ket{\psi}}}-\frac{D\ket{\psi}}{\Norm{D\ket{\psi}}}}<4\epsilon.
\end{equation}
\end{lemma}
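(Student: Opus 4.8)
The plan is to reduce the claim to a single elementary estimate on how the normalisation map $\ket{w}\mapsto\ket{w}/\Norm{\ket{w}}$ responds to a small additive perturbation, after first extracting the two quantitative consequences of the hypotheses. Taking $\ket{\psi}$ normalised and abbreviating $\ket{u}=C\ket{\psi}$, $\ket{v}=D\ket{\psi}$, the target is $e(\psi)=\Norm{\ket{u}/\Norm{\ket{u}}-\ket{v}/\Norm{\ket{v}}}$.

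First I would establish the lower bound $\Norm{\ket{u}}\geq 1$, which is the only use of the eigenvalue hypothesis. Since $C$ is Hermitian it has an orthonormal eigenbasis $\{\ket{u_j}\}$; writing $\ket{\psi}=\sum_j c_j\ket{u_j}$ with $\sum_j\Abs{c_j}^2=1$ gives $\Norm{C\ket{\psi}}^2=\sum_j\Abs{\lambda_j}^2\Abs{c_j}^2\geq\sum_j\Abs{c_j}^2=1$, using $\Abs{\lambda_j}\geq 1$. The second input is $\Norm{\ket{u}-\ket{v}}=\Norm{(C-D)\ket{\psi}}\leq\Norm{C-D}\leq\epsilon$. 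Feeding these into the reverse triangle inequality yields $\Norm{\ket{v}}\geq\Norm{\ket{u}}-\Norm{\ket{u}-\ket{v}}\geq 1-\epsilon>1/2$, which is exactly where $\epsilon<1/2$ is needed: it forces $D\ket{\psi}\neq 0$ so that $\ket{v}/\Norm{\ket{v}}$ is well defined, and it pins $\Norm{\ket{v}}$ strictly above $1/2$.

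The core step is the decomposition
\[
\frac{\ket{u}}{\Norm{\ket{u}}}-\frac{\ket{v}}{\Norm{\ket{v}}}=\ket{u}\left(\frac{1}{\Norm{\ket{u}}}-\frac{1}{\Norm{\ket{v}}}\right)+\frac{\ket{u}-\ket{v}}{\Norm{\ket{v}}}.
\]
The second summand has norm at most $\Norm{\ket{u}-\ket{v}}/\Norm{\ket{v}}$, while the first has norm $\Abs{\Norm{\ket{v}}-\Norm{\ket{u}}}/\Norm{\ket{v}}\leq\Norm{\ket{u}-\ket{v}}/\Norm{\ket{v}}$, again by the reverse triangle inequality. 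Hence $e(\psi)\leq 2\Norm{\ket{u}-\ket{v}}/\Norm{\ket{v}}\leq 2\epsilon/\Norm{\ket{v}}<4\epsilon$, the last inequality being strict precisely because $\Norm{\ket{v}}>1/2$. This is exactly the claimed bound.

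There is no genuine obstacle here: the lemma is a routine perturbation estimate, and the only points needing attention are the two places the eigenvalue hypothesis enters (the bound $\Norm{C\ket{\psi}}\geq 1$, which relies on $C$ being Hermitian and hence unitarily diagonalisable) and the reverse-triangle bookkeeping. I note that the constant is generous: dividing instead by $\Norm{\ket{u}}\geq 1$ in the decomposition gives the sharper $e(\psi)\leq 2\epsilon$, and the exact identity $e(\psi)^2=(\Norm{\ket{u}-\ket{v}}^2-(\Norm{\ket{u}}-\Norm{\ket{v}})^2)/(\Norm{\ket{u}}\,\Norm{\ket{v}})$ shows the truly sharp constant is smaller still, so matching $4\epsilon$ exactly is not a concern.
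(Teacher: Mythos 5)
Your proof is correct and follows essentially the same route as the paper: the paper's terse ``repeated application of the triangle inequality'' is exactly your decomposition, and your intermediate bound $2\epsilon/\Norm{D\ket{\psi}}\leq 2\epsilon/(\Abs{\lambda_{min}}-\epsilon)$ is the paper's stated inequality, closed in the same way using $\Abs{\lambda_{min}}\geq 1$ and $\epsilon<1/2$. Your closing observation that dividing by $\Norm{C\ket{\psi}}\geq 1$ instead would sharpen the constant to $2\epsilon$ is a nice extra, but it does not change the argument's structure.
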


\begin{proof}
Since the assertion is about normalised states, we can consider states with $\norm{\ket{\psi}}=1$ without loss of generality. Repeated application of the triangle inequality and the fact that $\Norm{C\ket{\psi}}\geq\Abs{\lambda_{min}}$ together imply that
\begin{equation}
e(\psi)\leq \frac{2\epsilon}{\Abs{\lambda_{min}}-\epsilon}.
\end{equation}
We can replace this with a looser bound $e(\psi)<c\epsilon$ for some constant $c>0$; then $c$ must satisfy $\frac{2}{c}<\Abs{\lambda_{min}}-\epsilon$, i.e., $c=\Omega(\frac{1}{|\lambda_{min}|})$. This indicates that for operators with eigenvalues that approach zero in magnitude, the approximation in normalised states is worse, because we will need to settle for a larger $c$. 

If we assume for the eigenvalues of $C$, as stated in the lemma, that $\Abs{\lambda_{min}}\geq 1$, then using $\epsilon<1/2$ gives $c\geq 4$, so that $e(\psi)<4\epsilon$.
\end{proof}

We can now state precisely how a matrix function will be approximated by a finite linear combination of unitaries. 

\begin{lemma}{\cite[Corollary 10]{Childs2015}}
\label{lem:LCU_fApprox}
Let $\A$ be a hermitian operator acting on $n$-qubits with eigenvalues lying in an interval $D\subset \mathbb{R}$. Suppose the function $f:D\rightarrow\R$ can be approximated by the linear combination of functions $g_i:D\rightarrow\R$ for $i=1,\ldots,m$ such that 
\begin{equation}
\sup_{x \in D}\Abs{f(x)-\sum_{i=1}^m g_i(x)}\leq \epsilon.
\end{equation}
Further, let $\{U_i:i=1,\ldots,m\}$ be a set of unitaries on $(n+t)$-qubits (with $t=\O(\log m)$) that satisfy
\begin{equation}
U_i\ket{0^t}\ket{\psi}=\ket{0^t}g_i(\A)\ket{\psi}+\ket{\Psi_i^\perp},
\end{equation}
$\forall\ n$-qubit states $\ket{\psi}$, with $\left(\ket{0^t}\bra{0^t}\otimes\id_n\right)\ket{\Psi_i^\perp}=0$. Given an oracle $\mathcal{P}_{\vec{b}}$ that prepares a state $\ket{b}$, there is a quantum algorithm that prepares with high probability a state $\ket{\tilde{\psi}}$ such that
\begin{equation}
\Norm{\ket{\tilde{\psi}}-\ket{\psi}}\leq c\epsilon, ~~~\text{with}~~~\ket{\psi}=\frac{f(\A)\ket{b}}{\norm{f(\A)\ket{b}}}.
\end{equation}
The algorithm uses amplitude amplification, making $\O(\alpha/\norm{f(\A)\ket{\psi}})$ queries to $\mathcal{P}_{b}$, and to the following unitary operators
\begin{align}
U:=\sum_{i=0}^m\ket{i}&\bra{i}\otimes U_i,
\ \ \ V\ket{0^s}=\frac{1}{\sqrt{\alpha}}\sum_{i=0}^m\sqrt{\alpha_i}\ket{i} \nonumber\\
W&:=(V^\dagger \otimes \id_n )U(V\otimes\id_n).
\end{align}
\end{lemma}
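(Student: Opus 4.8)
The plan is to exhibit the operator $W=(V^\dagger\otimes\id_n)U(V\otimes\id_n)$ as a block encoding of $\tfrac1\alpha M$, where $M:=\sum_{i=1}^m g_i(\A)$ is the operator that the series approximates and $\alpha=\sum_i\alpha_i$ is the $1$-norm of the (positive, phase-absorbed) weights loaded by $V$; to post-select the all-zero ancilla flag so that the surviving system state is proportional to $M\ket b$; and finally to convert the scalar approximation hypothesis into the claimed bound on normalised states by combining the spectral theorem with Lemma~\ref{lem:CDapprox}. Amplitude amplification then supplies the query count.

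First I would compute the action of $W$ on $\ket{0^t}\ket\psi$ line by line. Acting with $V$ prepares $\tfrac1{\sqrt\alpha}\sum_i\sqrt{\alpha_i}\,\ket i\ket\psi$ on the selection register; the controlled operation $U$ sends this to $\tfrac1{\sqrt\alpha}\sum_i\sqrt{\alpha_i}\,\ket i\bigl(\ket{0^t}g_i(\A)\ket\psi+\ket{\Psi_i^\perp}\bigr)$ by the block-encoding hypothesis on each $U_i$; and the identity $\bra{0^t}V^\dagger\ket i=\sqrt{\alpha_i}/\sqrt\alpha$ shows that the component of $W\ket{0^t}\ket\psi$ flagged by $\ket{0^t}$ equals $\tfrac1\alpha\sum_i\alpha_i g_i(\A)\ket\psi=\tfrac1\alpha M\ket\psi$. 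Here I must check that the junk terms combine safely: each $\ket{\Psi_i^\perp}$ is orthogonal to the flag by hypothesis, so their weighted sum $\tfrac1\alpha\sum_i\alpha_i\ket{\Psi_i^\perp}$ remains orthogonal to $\ket{0^t}$, and the off-diagonal action of $V^\dagger$ leaks no amplitude back into the good flag. This yields the block-encoding identity $(\ket{0^t}\bra{0^t}\otimes\id_n)W(\ket{0^t}\otimes\id_n)=M/\alpha$.

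Next I would convert the hypothesis $\sup_{x\in D}\Abs{f(x)-\sum_i g_i(x)}\leq\epsilon$ into an operator-norm statement. Since $\A$ is Hermitian with spectrum inside $D$, for any real $h$ one has $\Norm{h(\A)}=\max_{\lambda\in\mathrm{spec}(\A)}\Abs{h(\lambda)}\leq\sup_{x\in D}\Abs{h(x)}$; applying this to $h=f-\sum_i g_i$ gives $\Norm{f(\A)-M}\leq\epsilon$. I then apply Lemma~\ref{lem:CDapprox} with the roles of its operators played by $C=f(\A)$ and by $M$ (the latter in place of the lemma's $D$). Post-selecting the flag $\ket{0^t}$ on $W\ket{0^t}\ket b$ leaves the system register in the normalised state $M\ket b/\Norm{M\ket b}$, and the lemma (valid for $\epsilon<1/2$) shows this is within $c\epsilon$ of $f(\A)\ket b/\Norm{f(\A)\ket b}$, with $c=\Omega(1/\mu)$ where $\mu$ is the least eigenvalue magnitude of $f(\A)$ on the spectrum of $\A$. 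This produces the output state $\ket{\tilde\psi}$ obeying $\Norm{\ket{\tilde\psi}-\ket\psi}\leq c\epsilon$.

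Finally, the probability of observing the flag $\ket{0^t}$ is $\Norm{M\ket b}^2/\alpha^2$, which is $\bigl(\Norm{f(\A)\ket b}/\alpha\bigr)^2$ up to $\O(\epsilon)$; hence $\O\bigl(\alpha/\Norm{f(\A)\ket b}\bigr)$ rounds of amplitude amplification, each built from $\mathcal P_b$, $V$, $V^\dagger$ and $U$ together with a reflection about the flagged subspace, boost the success probability to a constant and give the stated query count. I expect the main obstacle to be the rigorous amplitude-amplification step in the presence of the junk subspace: one must verify that the good and bad components span a two-dimensional invariant subspace preserved by the two reflections, that only $\mathcal P_b$ (and not an inverse acting on an unknown state) is required, and that an estimate of $\Norm{f(\A)\ket b}$ is available to fix the number of rounds. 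The conceptually essential point, rather than a mere technicality, is the third step: the passage from the spectral-norm bound to a guarantee on normalised states forces in the factor $1/\mu$ and ties the achievable precision directly to the smallest singular value of the target $f(\A)$.
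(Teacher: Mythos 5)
Your proposal is correct and follows essentially the same route as the paper: the paper's proof is just the terse version of your first step (the ``straightforward calculation'' showing $W\ket{0^s}\ket{0^t}\ket{\psi}=\frac{1}{\alpha}\ket{0^s}\ket{0^t}f(\A)\ket{\psi}+\ket{\Psi^\perp}$ followed by standard amplitude amplification using $\mathcal{P}_{\vec{b}}$), and the conversion from the spectral-norm bound to the normalised-state bound via Lemma~\ref{lem:CDapprox} is exactly how the paper invokes this lemma in Section~\ref{sec:LCU_Chebyshev}. Your treatment of the junk terms under $V^\dagger$ and the $c=\Omega(1/\mu)$ factor makes explicit what the paper leaves implicit, and the minor ambiguity about whether the weights $\alpha_i$ are absorbed into the $g_i$ is inherited from the lemma statement itself.
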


\begin{proof}
By a straightforward calculation,
\begin{equation*}
W\ket{0^s}\ket{0^t}\ket{\psi}=\frac{1}{\alpha}\ket{0^s}\ket{0^t}f(\A)\ket{\psi}+\ket{\Psi^\perp}.
\end{equation*}
Since the state preparation map for $\ket{\psi}$ is available, standard amplitude amplification can be applied.
\end{proof}

\section{Amplitude Amplification}
\label{app:AA}
Given a classical input vector $\vec{b}=(b_1,\ldots,b_N)^T \in \mathbb{C}^N$, we usually assume we have an oracle $\mathcal{P}_{\vec{b}}$ that prepares a quantum state $\ket{\psi}$ corresponding to this vector, defined by 
\begin{equation}
\label{eqn:stateOracle}
\ket{0^m} \mapsto \frac{\sum_{i} b_i\ket{i}}{\norm{\sum_i b_i\ket{i}}} := \ket{\psi},
\end{equation}
where $m = \left\lceil \log N \right\rceil +1$. When such a unitary oracle is available, standard amplitude amplification (AA) can be applied. The initial state on which we perform AA is 
\begin{equation}
\ket{\Psi_i}:=W\ket{0^m}\ket{\psi}=\sqrt{p}\ket{0^m}\ket{\phi}+\sqrt{1-p}\ket{\Phi^{\perp}},
\end{equation}
where $\ket{\phi}=\frac{A\ket{\psi}}{\Norm{A\ket{\psi}}}$, $p=\Abs{\frac{\Norm{A\ket{\psi}}}{\alpha}}^2$ and $\ket{\Phi^{\perp}}$ is normalised. The projection onto the desired subspace is $\Pi:=\ket{0^m}\bra{0^m}\otimes\id_n$, which picks out the $0$-flag on the ancillary register. The reflection about the target subspace can then be taken as $R_t:=(\id_{m+n}-2\Pi)$, since it leaves the system register unchanged, while reflecting about the ancillary success flag. Then the reflection about the initial state is 
\begin{equation*}
R_i:=W(\id_m\otimes P_{\vec{b}})R_t(\id_m\otimes P_{\vec{b}})^{\dagger}W^{\dagger},
\end{equation*}
and the usual grover iterate is obtained, $G=-R_iR_t$.

Often such a state preparation map will not be available. In such cases, it may still be possible to use a version called `oblivious' amplitude amplification, if the matrix function we are attempting to implement is close to unitary. This is done using the Grover iterate like operator $S:=-WR_tW^{\dagger}R_t$. More details can be found in \cite{Berry2015SimulatingSeries} and \cite{Kothari2014EfficientComplexity}.

\section{Chebyshev polynomials}
\label{app:cheb}
It is a result in approximation theory that Chebyshev polynomials form the best polynomial basis for approximating functions on $[-1,1]$ in the supremum or $L_{\infty}$ norm. That is, they minimise the error $\sup_x|\tilde{f}(x)-f(x)|$ between the approximator $\tilde{f}$ and the target $f$.

The Chebyshev polynomials of the first kind, $\T_n(x)$, satisfy $\forall x \in [-1,1]$
\begin{enumerate}
\item $\T_0(x) = 1$, $\T_1(x) = x$ 
\item $\T_{n+1}(x) = 2x\T_n(x)-\T_{n-1}(x)$
\item  
\begin{align}
\int_{-1}^{1} \frac{\T_n(x) \T_m(x)}{\sqrt{1-x^2}} dx = 
    \begin{cases}
	0, &\text{$m \neq n$} \\
    \pi/2, &\text{$m=n \neq 0$} \\
    \pi, &\text{$m=n=0$}
	\end{cases}
\end{align}
\end{enumerate}

We can exactly represent the monomials $x^k$ on $[-1,1]$ in the basis of Chebyshev polynomials as the finite sum of terms up to degree $k$ as
\begin{equation}
\label{eqn:chebxk_appendix}
x^k=\sum_{j=1}^{k}C_{kj}\T_j(x) + \frac{1}{2}C_{k0}.
\end{equation}

The coefficients $C_{ij}$ can be calculated by the substitution $x=\cos(\theta)$, and using the property $\T_n(\cos(x))=\cos(nx)\ \ \forall x \in [0,\pi]$, as
\begin{align}
C_{kj}&=\frac{2}{\pi}\int_{-1}^1\frac{x^k\T_j(x)}{\sqrt{1-x^2}}dx \nonumber\\
      &=\frac{2}{\pi}\int_{0}^\pi \cos^k(x)\cos(jx)dx.
\end{align}

Writing $\cos(x)=\frac{1}{2}(e^x+e^{-x})$ and $\cos(jx)=Re(e^{ijx})$, we get
\begin{equation}
C_{kj}=\frac{1}{2^{k-1}\pi}Re\left(\int_{0}^\pi \sum_{l=0}^k \binom{k}{l} e^{i(2l-(k-j))}dx\right) .
\end{equation}

The real part of the exponential integrates to zero on $[0,\pi]$ unless $2l=k-j$. In this case, the integral is just $\pi$, the length of the interval, and we get
\begin{align}
C_{kj}=\begin{cases}
      \displaystyle\frac{1}{2^{k-1}}\binom{k}{(k-j)/2}, &\text{if $(k-j)$ is even} \\
      0, &\text{otherwise}.
      \end{cases}
\end{align}
This makes sense, since for $k$ even, $x^k$ is an even function and will contain only Chebyshev terms of even degree in its expansion (similarly when $k$ is odd).

The Chebyshev polynomials $\T$ all evaluate to $1$ at $x=1$ (can be seen from $\T_n(\cos(x)) = \cos(nx)$), and this gives us a useful property: for all integers $k\geq 0$
\begin{align}
\sum_{j=0}^{k}&C_{kj}=1.
\end{align}
For the quantum walk construction, we also need a few properties related to the Chebyshev polynomials of the second kind. We have that $\forall x \in [-1,1],~\T_0(x)=U_0(x)=1,~T_1(x)=x,~U_1(x)=2x$ and both $\T_n(x)$ and $U_n(x)$ satisfy the same recursion relation
\begin{equation}
f_{n+1}=2xf_{n}-f_{n-1},
\end{equation}
for all positive integers $n$, where $f$ represents either $\T$ or $U$. From this, it can be seen that $\T_n$ and $U_n$ satisfy the relations
\begin{eqnarray}
\lambda\T_{n-1}(\lambda)+(\lambda^2-1)U_{n-2}(\lambda)&=&\T_n(\lambda) \nonumber\\
\T_{n-1}(\lambda)+\lambda U_{n-2}(\lambda)&=&U_{n-1}(\lambda),
\end{eqnarray}
which turn out to be useful in the next section. Furthermore,
\begin{eqnarray}
\T_{n}(\cos(x))&=&\cos(nx) \nonumber\\
U_{n}(\cos(x))&=&\frac{\sin((n+1)x)}{\sin(x)},
\end{eqnarray}
where $x=\cos^{-1}(\lambda) \in [0,\pi] \ \forall \lambda \in [-1,1]$.

\section{Implementing Chebyshev polynomials in $\A$ using a quantum walk}
\label{app:walk}
To actually realise an implementation based on the LCU method, we need to be able to perform the unitary matrices in the decomposition of the target matrix. One of the families of unitaries that have been found useful is a quantum walk operator $\W$ which has the property that when restricted to a certain invariant subspace, $\W^n$ has a block form with the first block being the operator $\T_n(H)$, where $\T_n$ is the Chebyshev polynomial of the first kind and degree $n$, and $A=dH$ is the matrix using which the walk is constructed. We describe this quantum walk operator in this section for a normalised Hermitian matrix $A$. More details can be found in \cite{Childs2015}.

Given a $d$-sparse Hermitian matrix $\A$ acting on $\mathbb{C}^N$, we consider two copies of the system, each adjoined with a single ancillary qubit to form $\mathbb{C}^N\otimes\mathbb{C}^2$, and associate to $A$ a unitary quantum walk in this expanded space, $\mathbb{C}^{2N}\otimes \mathbb{C}^{2N}$. The goal is to construct a method that implements a Chebyshev polynomial function $\T_n(\A)$, but the construction below will instead produce $\T_n(\A/d)$.

We start by considering an N-dimensional hyperplane $V \subset \mathbb{C}^{2N}\otimes \mathbb{C}^{2N}$, spanned by the states 
\begin{equation}
\ket{\psi_j}=\ket{j}\otimes \frac{1}{\sqrt{d}} \sum_{k:\A_{jk}\neq 0}\left(\sqrt{\A_{jk}^*}\ket{k}+\sqrt{1-\Abs{\A_{jk}}}\ket{k+N}\right),
\end{equation}
for $j=1,\ldots,N$. These states are orthonormal, since we can assume that for rows with fewer than $d$ entries, the summation is taken over enough zero entries to make up the total of $d$ terms in the linear combination. Next, define an isometry $T:\mathbb{C}^{N}\rightarrow\mathbb{C}^{2N}\otimes \mathbb{C}^{2N}$ that embeds the system register into the expanded space
\begin{equation}
T := \sum_{i=1}^N \ket{\psi_j}\bra{j}.
\end{equation}
The third ingredient is a swap operator on $\mathbb{C}^{2N}\otimes \mathbb{C}^{2N}$
\begin{equation}
S\ket{j,k}:=\ket{k,j} \hfill \forall j,k = 1,\ldots,2N.
\end{equation}
The unitary walk operator is then defined as
\begin{equation}
\W:=S\left(2TT^{\dagger}-\id\right).
\end{equation}

Now it can be shown that in a subspace $\mathcal{B}\subset\mathbb{C}^{2N}\otimes \mathbb{C}^{2N}$, defined as the span of the isometric mappings of the basis states $\ket{j}$, and the corresponding swapped states, the walk operator has a block form such that the blocks are Chebyshev polynomials in $\A/d$, i.e. $\T_n(\A/d)$ and $U_n(A/d)$.

\begin{lemma}
\label{lem:Wblock}
The $2N$-dimensional subspace $\mathcal{B}=\textnormal{span}\{T\ket{j}, ST\ket{j} |\ j=1,\ldots,N\} \subset \mathbb{C}^{2N}\otimes \mathbb{C}^{2N}$ is invariant under the walk operator $\W$. Further, $\W$ can be put in a block form on $\mathcal{B}$
\begin{equation}
\label{eqn:Wblock}
\W|_{\mathcal{B}}=\begin{bmatrix}
	H & -\sqrt{1-H^2} \\
    \sqrt{1-H^2} & H
\end{bmatrix},
\end{equation}
where $H=\A/d$. We henceforth drop the subscript $\mathcal{B}$, and work only in this subspace.
\end{lemma}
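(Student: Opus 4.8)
The plan is to reduce the whole statement to two algebraic identities for the isometry $T$ and the swap $S$, and then to diagonalise $\W$ eigenspace-by-eigenspace of $H=\A/d$. First I would record that $T$ is an isometry, $T^\dagger T=\id_N$, which is precisely the orthonormality of the states $\ket{\psi_j}=T\ket{j}$ asserted above; hence $TT^\dagger$ is the orthogonal projector onto $\mathrm{span}\{T\ket{j}\}$. The single computation carrying the real content is the identity
\begin{equation*}
T^\dagger S T = H = \A/d .
\end{equation*}
To see it I would expand $\bra{\psi_j}S\ket{\psi_k}$ directly: applying $S$ to $\ket{\psi_k}$ moves its second register into the first, so matching the first register against $\bra{j}$ forces the index in the non-shifted $\ket{m}$ part to equal $j$, while matching the second register against $\ket{k}$ picks out the amplitude $\sqrt{\A_{jk}}$. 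The shifted $\ket{m+N}$ terms never contribute since their index exceeds $N$. This leaves $\bra{\psi_j}S\ket{\psi_k}=\tfrac1d\sqrt{\A_{kj}^{*}}\sqrt{\A_{jk}}$, and Hermiticity $\A_{kj}=\A_{jk}^{*}$ collapses this to $\A_{jk}/d=H_{jk}$. I expect this bookkeeping with the square roots and the swap to be the main obstacle; everything else is formal.

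Given these two facts, invariance of $\mathcal{B}$ is immediate. Using $T^\dagger T=\id$ one gets $\W T\ket{j}=S(2TT^\dagger T-T)\ket{j}=ST\ket{j}$, and using $S^2=\id$ together with $T^\dagger S T=H$,
\begin{equation*}
\W\, ST\ket{j}=2\,STT^\dagger S T\ket{j}-T\ket{j}=2\,STH\ket{j}-T\ket{j}.
\end{equation*}
Thus $\W$ sends each generator of $\mathcal{B}$ back into $\mathcal{B}$, which settles the invariance claim.

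For the block form I would pass to an orthonormal eigenbasis $\{\ket{\mu}\}$ of $H$ (its eigenvalues $\mu$ lie in $[-1,1]$ since $\Norm{H}=\Norm{\A}/d\le 1$) and set $\ket{\mu_+}=T\ket{\mu}$, $\ket{\mu_-}=ST\ket{\mu}$. The identities above give $\W\ket{\mu_+}=\ket{\mu_-}$ and $\W\ket{\mu_-}=2\mu\ket{\mu_-}-\ket{\mu_+}$, so each plane $\mathcal{B}_\mu=\mathrm{span}\{\ket{\mu_+},\ket{\mu_-}\}$ is $\W$-invariant, and distinct planes are mutually orthogonal because $\bra{\mu_+}\ket{\mu'_+}=\bra{\mu_-}\ket{\mu'_-}=\delta_{\mu\mu'}$ and $\bra{\mu_+}\ket{\mu'_-}=\bra{\mu}H\ket{\mu'}=\mu'\delta_{\mu\mu'}$. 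Within $\mathcal{B}_\mu$ the unit vectors $\ket{\mu_+},\ket{\mu_-}$ have overlap $\mu$, so Gram--Schmidt gives the orthonormal frame $\ket{\mu_+},\ \ket{w_\mu}=(\ket{\mu_-}-\mu\ket{\mu_+})/\sqrt{1-\mu^2}$; a one-line substitution then yields $\W\ket{\mu_+}=\mu\ket{\mu_+}+\sqrt{1-\mu^2}\,\ket{w_\mu}$ and $\W\ket{w_\mu}=-\sqrt{1-\mu^2}\,\ket{\mu_+}+\mu\ket{w_\mu}$, i.e. the $2\times2$ block $\left[\begin{smallmatrix}\mu & -\sqrt{1-\mu^2}\\ \sqrt{1-\mu^2}&\mu\end{smallmatrix}\right]$. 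Collecting the $\ket{\mu_+}$ into the first $N$-dimensional block and the $\ket{w_\mu}$ into the second, and invoking the functional calculus $\mu\mapsto H$, $\sqrt{1-\mu^2}\mapsto\sqrt{1-H^2}$, reassembles these planes into the claimed form \eqref{eqn:Wblock}. The dimension count $\dim\mathcal{B}=2N$ then holds provided $H$ has no eigenvalue $\pm1$ (automatic once $d\ge2$, where $\sqrt{1-\mu^2}>0$), the values $\mu=\pm1$ being the degenerate case where the plane collapses to a line on which $\W$ acts as the scalar $\mu$.
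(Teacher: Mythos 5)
Your proof is correct and follows essentially the same route as the source this paper defers to: the paper states Lemma~\ref{lem:Wblock} without proof, citing \cite{Childs2015}, and the standard argument there is exactly your reduction to the two identities $T^\dagger T=\id$ and $T^\dagger S T=H$, followed by the two-dimensional analysis of $\mathrm{span}\{T\ket{\mu},ST\ket{\mu}\}$ on each eigenspace of $H$ and reassembly via the functional calculus. Two minor points, neither a gap: your computation of $\bra{\psi_j}S\ket{\psi_k}$ implicitly uses the branch convention $\sqrt{A_{jk}^*}=(\sqrt{A_{jk}})^*$ (needed when $A_{jk}$ is a negative real, and inherited from the paper's definition of $\ket{\psi_j}$), and your caveat that $\dim\mathcal{B}=2N$ fails exactly when $H$ has an eigenvalue $\pm 1$ (automatic to exclude for $d\geq 2$ since $\Norm{H}\leq 1/d$) is a welcome precision that the lemma statement glosses over.
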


\begin{lemma}
\label{lem:walk_n}
For a matrix 

\begin{equation}
\W=\begin{bmatrix}
	\lambda & -\sqrt{1-\lambda^2} \\
    \sqrt{1-\lambda^2} & \lambda
\end{bmatrix}
\end{equation}
 
where $\Abs{\lambda}\leq 1$, and any positive integer $n$,
\begin{equation}
\W^n=\begin{bmatrix}
	\T_n(\lambda) & -\sqrt{1-\lambda^2}U_{n-1}(\lambda) \\
    \sqrt{1-\lambda^2}U_{n-1}(\lambda) & \T_n(\lambda)
\end{bmatrix},
\end{equation}
where $\T_n(x)$ and $U_n(x)$ are the Chebyshev polynomials of the first and second kinds respectively, having degree $n$, defined on $[-1,1]$.
\end{lemma}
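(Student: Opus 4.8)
The plan is to proceed by induction on $n$, using the two Chebyshev identities recorded in Appendix \ref{app:cheb}. For the base case $n=1$ one checks directly that the claimed form reduces to $\W$ itself: since $\T_1(\lambda)=\lambda$ and $U_0(\lambda)=1$, the right-hand side is exactly the given matrix. Assuming the formula for $\W^n$, I would then compute $\W^{n+1}=\W\,\W^n$ as a product of two $2\times 2$ matrices and verify entry by entry that the result has the same form with $n$ replaced by $n+1$.

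Carrying out the multiplication, the top-left entry becomes $\lambda\T_n(\lambda)-(1-\lambda^2)U_{n-1}(\lambda)$, which equals $\T_{n+1}(\lambda)$ by the first identity $\lambda\T_n+(\lambda^2-1)U_{n-1}=\T_{n+1}$ of Appendix \ref{app:cheb} (with the index shifted by one). The off-diagonal entries both reduce to $\pm\sqrt{1-\lambda^2}\,\bigl(\T_n(\lambda)+\lambda U_{n-1}(\lambda)\bigr)$, which equals $\pm\sqrt{1-\lambda^2}\,U_n(\lambda)$ by the second identity $\T_n+\lambda U_{n-1}=U_n$. The bottom-right entry is handled exactly as the top-left. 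Thus all four entries match the claimed form for $\W^{n+1}$ and the induction closes; the only arithmetic subtlety is tracking the factor $\sqrt{1-\lambda^2}\cdot\sqrt{1-\lambda^2}=1-\lambda^2$ that arises from the product of off-diagonal terms.

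An alternative and perhaps more transparent route exploits the hypothesis $\Abs{\lambda}\leq 1$, so that I may write $\lambda=\cos\theta$ with $\theta\in[0,\pi]$ and $\sqrt{1-\lambda^2}=\sin\theta$. Then $\W$ is literally the planar rotation by angle $\theta$, so $\W^n$ is the rotation by $n\theta$, whose entries are $\cos n\theta$ on the diagonal and $\pm\sin n\theta$ off it. Invoking $\T_n(\cos\theta)=\cos n\theta$ and $U_{n-1}(\cos\theta)=\sin(n\theta)/\sin\theta$ from Appendix \ref{app:cheb} immediately recovers the claimed matrix, since $\sin n\theta=\sin\theta\,U_{n-1}(\cos\theta)=\sqrt{1-\lambda^2}\,U_{n-1}(\lambda)$.

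The main (and essentially only) obstacle is the degenerate case $\lambda=\pm1$ in the trigonometric argument, where $\sin\theta=0$ and the relation $U_{n-1}(\cos\theta)=\sin(n\theta)/\sin\theta$ holds only as a limit. This is harmless: at $\lambda=\pm1$ the matrix $\W$ is just $\pm\id$, so $\W^n=(\pm1)^n\id$, and one checks directly that $\T_n(\pm1)=(\pm1)^n$ while the off-diagonal term $\sqrt{1-\lambda^2}\,U_{n-1}(\lambda)$ vanishes, consistent with the stated formula. The inductive proof sidesteps this case-splitting entirely, which is why I would present it as the primary argument and relegate the rotation-matrix observation to a remark.
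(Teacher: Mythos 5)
Your proof is correct and takes essentially the paper's intended route: the paper states the identities $\lambda\T_{n-1}(\lambda)+(\lambda^2-1)U_{n-2}(\lambda)=\T_n(\lambda)$ and $\T_{n-1}(\lambda)+\lambda U_{n-2}(\lambda)=U_{n-1}(\lambda)$ in Appendix~\ref{app:cheb} precisely so that this lemma follows by the $2\times 2$ matrix-product induction you carry out, and your entry-by-entry verification (including the $\sqrt{1-\lambda^2}\cdot\sqrt{1-\lambda^2}=1-\lambda^2$ bookkeeping) is accurate. Your alternative rotation-matrix argument, with the $\lambda=\pm 1$ degeneracy handled separately, is also valid, but it is the inductive argument that mirrors the paper.
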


Since $H$ is Hermitian, its eigenvectors $\ket{\lambda}$ span $\mathbb{C}^N$. Thus within the invariant subspace $\mathcal{B}$ of $\W$, for any state $\ket{\psi}\in \mathbb{C}^N$, we combine the above two lemmas to get
\begin{equation}
\W^n T \ket{\psi}\rightarrow T\T_n(H)\ket{\psi}+\ket{\psi^{\perp}},
\end{equation}
where $\ket{\psi^{\perp}}$ is orthogonal to $T\ket{j}$ for each $j=1,\ldots,N$, but is not normalised.  

Since $T$ is an isometry, we can dilate and implement it by a unitary circuit that performs the map $\ket{0^m}\ket{\psi}\mapsto T\ket{\psi}$ for any state $\ket{\psi}\in \mathbb{C}^N$, with $m=\left\lceil\log 2N\right\rceil+1$ ancillaries. Hence applying $T^{\dagger}\W^n T$ will perform the map 
\begin{equation}
\label{eqn:walk_cheb}
\ket{0^m}\ket{\psi}\mapsto \ket{0^m}\T_n(H)\ket{\psi}+\ket{\Phi^{\perp}},
\end{equation}
where $\ket{\Phi^{\perp}}$ is not normalised but $\left(\ket{0^m}\bra{0^m}\otimes\id_N\right)\ket{\Phi^{\perp}}=0$. Post-selecting on measuring the first $m$ registers to be in the `$0$' state, we get a probabilistic implementation of the function $\T_n(H)$ with $H=\A/d$. 

We need only $\O(1)$ queries to the oracle $\mathcal{P}_\A$ to implement the  walk operator $\W$ as well as the isometry $T$ \cite{Berry2012}. So taking $n$ steps of the walk requires $\O(n)$ queries to $\mathcal{P}_\A$.

\end{document}